\newcommand{\wt}{\widetilde}
\newcommand{\mb}{\mathbb}
\newcommand{\mc}{\mathcal}
\theoremstyle{thmstyleone}%
\newtheorem{theorem}{Theorem}
\newtheorem{definition}{Definition}
\newtheorem{lemma}{Lemma}
\theoremstyle{thmstyletwo}%
\newtheorem{remark}{Remark}%
\newtheorem{assumption}{Assumption}
\begin{document}
\title[A note on the stabilizer formalism via noncommutative graphs]{A note on the stabilizer formalism via noncommutative graphs}


\author*[1,2]{\fnm{Roy} \sur{Araiza}}\email{raraiza@illinois.edu}

\author[1]{\fnm{Jihong} \sur{Cai}}
\author[1]{\fnm{Yushan} \sur{Chen}}
\author[1]{\fnm{Abraham} \sur{Holtermann}}
\author[1]{\fnm{Chieh} \sur{Hsu}}
\author[1]{\fnm{Tushar} \sur{Mohan}}
\author*[1,3]{\fnm{Peixue} \sur{Wu}}\email{p33wu@uwaterloo.ca}
\author[1]{\fnm{Zeyuan} \sur{Yu}}

\affil*[1]{\orgdiv{Department of Mathematics}, \orgname{University of Illinois at Urbana-Champaign}, \country{USA}}

\affil[2]{\orgdiv{IQUIST}, \orgname{University of Illinois at Urbana-Champaign}, \country{USA}}

\affil[3]{\orgdiv{Institute for Quantum Computing}, \orgname{University of Waterloo}, \country{Canada}}


\abstract{In this short note we formulate a stablizer formalism in the language of noncommutative graphs. The classes of noncommutative graphs we consider are obtained via unitary representations of finite groups, and suitably chosen operators on finite-dimensional Hilbert spaces. Furthermore, in this framework, we generalize previous results in this area for determining when such noncommutative graphs have anticliques. }

\maketitle

\section{Introduction}
Since they were first abstractly characterized by Choi and Effros in \cite{Choi}, operator systems have had profound applications on many areas of mathematics, and more recently, on quantum information theory, see e.g. \cite{duan2012,weaver2017, Brannan2022}. Such a framework led to generalizations of Lovász' famous theta function $\vartheta$ \cite{duan2012}, nonlocal games \cite{Brannan2020}, the study of correlation sets \cite{Araiza1, Araiza2}, and extensions of classical graph invariants  \cite{Paulsen13}. Our work will focus on a special class of finite-dimensional operator systems called \emph{noncommutative graphs}, which were first introducted in the seminal work of Duan-Severini-Winter in \cite{duan2009}. Using this special class of operator systems, Duan,Severini and Winter were able to extend the notions of classical, quantum, and entanglement assisted capacities. Noncommutative graphs are obtained by taking a Kraus representation of a quantum channel $\Phi: \mathbb B(H_1) \to \mathbb B(H_2), \Phi \rho = \sum E_i \rho E_i^\dag$, and considering the subspace $\mathcal V:= span\{E_i^\dag E_j: 1 \leq i,j \leq n\}.$ Two important properties of $\mathcal V$ are what characterizes an operator system. First note $\mathcal V = \mathcal V^\dag,$ which is to say $\mathcal V$ is \emph{self-adjoint}, and second is that $I \in \mathcal V$, which is to say that $\mathcal V$ is unital. Due to the aforementioned paper of Choi-Effros, every operator system may be concretely represented as a self-adjoint unital subspace of $\mathbb B(H)$. 
Though not considered in this framework, noncommutative graphs are the essential ingredient in the famous Knill-Laflamme subspace condition from quantum error correction (cite Knill-Laflamme). In particular, given a quantum channel $\Phi$ with Kraus representation $\{E_i\}_{i=1}^r, E_i \subset \mathbb B(H_1: H_2)$, then $\Phi$ is correctable if and only if \begin{align}
    PE_i^\dag E_jP = \lambda_{ij} P, \label{eq: Knill Laflamme}
\end{align} for every error $E_i \in \mathbb B(H_1: H_2)$, where $P: \mathcal H \to \mathcal C$ is the projection onto the codespace $\mathcal C$,  and $\lambda = [\lambda_{ij}] \in \mathbb M_r$ is a hermitian matrix.Equivalently, this may be expressed as \begin{align}
    P\mathcal V P = \mathbb C P. 
\end{align} Given a noncommutative graph $\mathcal V$, if Equation~\eqref{eq: Knill Laflamme} is satisfied, we will call $P$ an \emph{anticlique} for $\mathcal V$. \indent With this framework in mind, a natural question is the following: \emph{for which classes of noncommutative graphs do there exists anticliques?} This question led to a series of papers (\cite{amosov2017, amosov2018_1, amosov2018_2}) in which the authors answer this question for particular examples of classes of noncommutative graphs. In particular, in \cite{amosov2018_2}, the authors prove that $span \{U_\varphi M_o U_{\varphi}: \varphi \in \mathbb T\}$ is a noncommutative graph with anticliques $\{P_s: 1\leq s \leq d\}$, where $U_\varphi = \sum_{s=1}^d e^{i\varphi s} P_s$ is a unitary representation of $\mathbb T$. In \cite{weaver2017}, using combinatorial techniques, it was proven that for $\text{dim}\ \mc H = d$ and $k\le d$, if the noncommutative graph $\mc V$ satisfies 
 \begin{align}\label{bound: general}
 \text{dim}\ \mc V(\text{dim}\ \mc V + 1)\le \frac{d}{k},
 \end{align}
then there exists $\mc C \subseteq \mc H$ with $\text{dim}\ \mc C=k$ such that $\text{dim}\ P_{\mc C}\mc V P_{\mc C} = 1$. Moreover, if $\mc V$ is given by 
\begin{align*}
\mc V:= \text{span}\{E_1,\cdots, E_m\},\ [E_i,E_j]=0,\ \forall i,j\le m
\end{align*}
and
\begin{equation}\label{bound: commuting}
\dim\ \mc V \le \frac{d-k}{k-1},
\end{equation}
then there exists $\mc C \subseteq \mc H$ with $\text{dim}\ \mc C=k$ such that $\text{dim}\ P_{\mc C}\mc V P_{\mc C} = 1$. In  \cite{amosov2017} it was pointed out that the above upper bound is not sharp when the underlying Hilbert space $\mc H$ has a tensor product structure. In fact, if $\mc H = \mc H_1\otimes \mc H_2$ with $\text{dim}\ \mc H_1 = \text{dim}\ \mc H_2 = n>2$, a concrete noncommutative graph $\mc V$ with $\text{dim}\ \mc V = 2n(n-1)+1$ and a code space with dimension $n$ were given, which violates the above bounds \eqref{bound: general} and \eqref{bound: commuting}. We lend our hand in answering this question by weakening the assumptions in \cite{amosov2018_2} and therefore constructing a larger class of noncommutative graphs which exhibit anticliques. To this end, given a finite group $G$, let $\pi: G \to \mathbb B(H)$ be a unitary representation, $M_o \in \mathbb B(H)$, and consider the subspace $\mathcal V_{M_o}:= span \{\pi(g) M_o \pi(g): g \in G\}$. For any $g\in G$, let $P_{i_{g}}$ be the projector onto the $i$-th eigenspace of $\pi(g)$. Then our first main result is as follows: 
\begin{theorem}
    Suppose $\mathcal V_{M_o}$ is a noncommutative graph and $G$ is Abelian. Then for any sequence $\{i_g \in J_g\}_{g\in G}$, if $$P = \prod_{g\in G}P_{i_g}$$
has rank no less than 2, then $P$ is an anticlique for $\mc V$.
\end{theorem}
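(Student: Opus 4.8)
The plan is to use commutativity of the operators $\pi(g)$ to collapse each generator $\pi(g)M_o\pi(g)$, once compressed by $P$, to a scalar multiple of the single operator $PM_oP$, and then to use the unitality that is part of the definition of a noncommutative graph to pin $PM_oP$ itself down to a scalar multiple of $P$.

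First I would record the structural input coming from $G$ being Abelian: since $\pi$ is a homomorphism, $\pi(g)\pi(h)=\pi(gh)=\pi(hg)=\pi(h)\pi(g)$, so the unitaries $\{\pi(g)\}_{g\in G}$ pairwise commute, and hence so do all of their spectral projectors. Writing the spectral decomposition $\pi(g)=\sum_{i\in J_g}\lambda_{i,g}P_{i,g}$ with $|\lambda_{i,g}|=1$ (so that $P_{i_g}=P_{i_g,g}$), the operator $P=\prod_{g\in G}P_{i_g}$ is then itself an orthogonal projection and satisfies $PP_{i_g,g}=P_{i_g,g}P=P$ for every $g$; in other words, $\operatorname{ran}P$ lies inside the $i_g$-th eigenspace of $\pi(g)$. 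Setting $\lambda_g:=\lambda_{i_g,g}$ and using mutual orthogonality of the $P_{i,g}$, this yields $P\pi(g)=\lambda_g P=\pi(g)P$ for every $g\in G$.

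Next I would compress the generators. From the identities just obtained, for each $g$,
\begin{equation*}
P\,\pi(g)M_o\pi(g)\,P=\big(P\pi(g)\big)\,M_o\,\big(\pi(g)P\big)=\lambda_g^2\,PM_oP ,
\end{equation*}
so it only remains to show $PM_oP\in\mathbb{C}P$. This is where I would invoke the hypothesis that $\mathcal{V}_{M_o}$ is a noncommutative graph, and only through its unitality: $I\in\mathcal{V}_{M_o}$, hence $I=\sum_{g\in G}c_g\,\pi(g)M_o\pi(g)$ for some scalars $c_g$. Compressing this identity by $P$ and using the display above gives $P=PIP=\big(\sum_{g\in G}c_g\lambda_g^2\big)PM_oP$. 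Since $\operatorname{rank}P\ge 2$ we have in particular $P\neq 0$, so $\alpha:=\sum_{g\in G}c_g\lambda_g^2$ is nonzero and $PM_oP=\alpha^{-1}P$.

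Finally I would assemble the pieces: combining the last two computations, $P\,\pi(g)M_o\pi(g)\,P=\alpha^{-1}\lambda_g^2\,P\in\mathbb{C}P$ for every $g$, and since $\mathcal{V}_{M_o}$ is the span of these operators, $P\mathcal{V}_{M_o}P\subseteq\mathbb{C}P$; the reverse inclusion is immediate from $P=PIP\in P\mathcal{V}_{M_o}P$. Hence $P\mathcal{V}_{M_o}P=\mathbb{C}P$, i.e.\ $P$ is an anticlique for $\mathcal{V}_{M_o}$. The only genuinely substantive point is the first step: commutativity of the $\pi(g)$ is exactly what forces $\operatorname{ran}P$ into a simultaneous eigenspace and lets each $\pi(g)$ act there as a scalar; everything after that is bookkeeping. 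I would also note that the rank $\ge 2$ hypothesis enters the argument only to guarantee $P\neq 0$ — its real significance is that a rank-one projector is automatically an anticlique, so the content of the theorem is precisely that these higher-rank common spectral projectors remain anticliques even though $P\,\mathbb{B}(H)\,P$ is then no longer one-dimensional.
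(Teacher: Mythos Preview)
Your proof is correct and follows essentially the same route as the paper's: compress each generator by $P$ to get a scalar multiple of $PM_oP$, then use $I\in\mathcal V_{M_o}$ to show $PM_oP\in\mathbb C P$. The only cosmetic difference is that the paper's body actually defines $\mathcal V_{M_o}$ with $\pi(g)M_0\pi(g)^*$ rather than $\pi(g)M_o\pi(g)$ as in the introduction statement you were given, so their scalar is $|\lambda_{i_g}(g)|^2$ instead of your $\lambda_g^2$; the argument is otherwise identical.
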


Our next and final theorem is obtaining a stabilizer formalism in the language of noncommutative graphs. In what follows, let $\mathcal P_n$ denote the Pauli group acting on $2^n$ qubits.

\begin{theorem}\label{Stabilizer: new}
For any Abelian subgroup $G \subseteq \mc P_n$ such that $-I_2^{\otimes n}\notin G$, and $M_0\in \mb M_{2^n}$, define
\begin{equation}
\mc V_{M_0}:= \text{span}\{gM_0g: g\in G\}.
\end{equation}
Then the span of $\mc V_{M_0}$ such that $\mc V_{M_0}$ is an operator system coincides with all the correctable errors outside the normalizer of $G$ plus identity. In other words,
\begin{equation}
\text{span}\{ \mc V_{M_0}: M_0\ \text{is\ such\ that}\ \mc V_{M_0}\ \text{is\ an\ operator\ system}\} = \text{span}\{(\mc P_n \backslash N(G)) \cup I_2^{\otimes n}\}.
\end{equation}
\end{theorem}

The paper is structured as follows: In Section 2 we cover some preliminary material and prove Theorem 1. In Section 3 we review the stabilizer formalism and prove Theorem 2. 

\section*{Acknowledgements}
The authors would like to express their extreme gratitude to the Illinois Geometry Lab at the University of Illinois at Urbana-Champaign, from which this work originated. R. Araiza and P. Wu would like to thank Thomas Sinclair for comments on an earlier draft of the manuscript. R. Araiza was funded as a JL Doob Research Assistant Professor during the writing of this manuscript.


\section{Noncommutative graph and quantum error-correcting codes}
Suppose $\mc H$ is a Hilbert space and $\mb B(\mc H)$ is the set of bounded operators; \cite{duan2012} first introduced the definition of a (quantum) noncommutative graph:
\begin{definition}
A noncommutative graph $\mc V$ is a linear subspace of $\mb B(\mc H)$, such that 
\begin{itemize}
\item $v\in \mc V$ implies $v^* \in \mc V$.
\item $I \in \mc V$.
\end{itemize}
\end{definition}
As mentioned in the introduction, noncommutative graphs are a special class of more general objects known as operator systems, whose theory as been greatly developed over the last few decades (see \cite{Paulsen02}). A \textit{quantum code} is defined as a subspace $\mc C \subseteq \mc H$. We have the following definition of a \textit{quantum error-correcting code} a la Knill-Laflamme \cite{KL97}:
\begin{definition}
We say that $\mc C \subseteq \mc H$ is a quantum error-correcting code for the noncommutative graph $\mc V$ if 
\begin{equation}\label{}
\dim P_{\mc C}\mc V P_{\mc C} = 1,
\end{equation}
where $P_{\mc C}: \mc H \to \mc C$ is the projection onto $\mc C$.
\end{definition}
Lending our hand in answering the main question stated in the introduction, we consider noncommutative graphs built from two ingredients:
\begin{itemize}
\item A group $G$ with a unitary representation $\pi: G \to \mb B(\mc H)$.
\item An operator $M_0 \in \mb B(\mc H)$.
\end{itemize}
A subspace $\mc V \subseteq \mb B(\mc H)$ given by $(G,\pi)$ and $M_0$ may be defined as 
\begin{equation}\label{nc graph_1}
\mc V:= \overline{span}\{\pi(g)M_0\pi(g)^*:g\in G\}.
\end{equation}
We point out that in general, some restrictions on $(G,\pi)$ and $M_0$ are required for the subspace in \eqref{nc graph_1} to be a concrete operator system, see Remark \ref{restriction:example} for an example.

In this note, we study compact (finite) $G$ with a projective unitary representation 
$$\pi:G \to \mb B(\mc H),$$
 $\text{dim}\ \mc H = d<\infty$, and a prefixed $M_0 \in \mb B(\mc H)$ such that 
\begin{assumption}
$\mc V$ defined by \eqref{nc graph_1} is an operator system. 
\end{assumption}
Given a group $G$, we will always assume we have a fixed representation $\pi: G \to \mathbb (H)$, and $M_0$ is chosen such that \eqref{nc graph_1} is an operator system. At times we will denote $\mc V$ as $\mc V_{M_0}$ to emphasize the dependence on the operator $M_0$. 
\begin{remark}\label{restriction:example}
\noindent Note that in \cite{amosov2018_1, amosov2018_2}, the authors start with a positive operator $M_0$ and 
\begin{equation}\label{identity condition}
\int_{G} \pi(g)M_0\pi(g)^*d\mu(g) = I,
\end{equation}
where $\mu$ is the Haar measure on $G$. On the other hand, our assumption is weaker. For example, if $G = \{I_2,X\} \subseteq \mb M_2$ is the group generated by Pauli-X operator where $X = \begin{pmatrix}
0& 1 \\
1 & 0
\end{pmatrix}$, and the representation is given by $\pi: G \to U(2), g\mapsto g$. Then one can show by direct calculation that 
\begin{equation}
\begin{aligned}
& \mc V_{M_0}\ \text{is\ an\ operator\ system} \\ 
\iff &\ M_0 = c_0 I + c_1 Y + c_2 Z, c_0\neq 0, c_1,c_2 \in \mb C,\ \text{such\ that}\ \exists c\in \mb C, c_i = c \overline{c_i}, i=1,2.
\end{aligned}
\end{equation}
Thus our class of $\mc V_{M_0}$ is larger because $M_0$ can even be non-self-adjoint.
\end{remark}
In \cite[Proposition 1]{amosov2018_2}, for an Abelian group $G$, assuming all the unitaries have a common eigenspace, it was shown that the projection onto that eigenspace is an anticlique. We can generalize that result in our setting. 

Suppose $G$ is Abelian and $\pi: G \to \mb B(\mc H)$ is a finite dimensional projective unitary representation with $\text{dim}\ \mc H = d$, then for any $g_1,g_2 \in G$, we have
$$\pi(g_1)\pi(g_2) = \pi(g_2)\pi(g_1).$$
Therefore, $\{\pi(g):g\in G\}$ can be diagonalized simultaneously, i.e., there exists a basis $\{\ket{e_j}:1\le j \le d\}$, such that 
\begin{equation}
\pi(g) = \sum_{j=1}^d \lambda_j(g) \ket{e_j}\bra{e_j}
\end{equation}
For each $g\in G$, suppose $J_g$ is the index set such that for any $r,s\in J_g, r\neq s$, we have $\lambda_r(g)\neq \lambda_s(g)$, i.e., it is the index set of different eigenvalues. Then the spectral decomposition can be given as 
\begin{equation}
\pi(g) = \sum_{i \in J_g}\lambda_i(g)P_i(g),
\end{equation}
where $$P_i(g) = \sum_{j: \lambda_j(g) = \lambda_i(g)}\ket{e_j}\bra{e_j}$$
are disjoint projections onto the eigenspace corresponding to the eigenvalue $\lambda_i(g)$.
\begin{theorem}\label{General}
Suppose $\mc V_{M_0}$ defined by \eqref{nc graph_1} is an operator system and $G$ is Abelian. Then for any sequence $\{i_g \in J_g\}_{g\in G}$, if $$P = \prod_{g\in G}P_{i_g}$$
has rank no less than 2, then $P$ is an anticlique for $\mc V$.
\end{theorem}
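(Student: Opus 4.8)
The plan is to reduce the anticlique condition $P\mc V P=\mb C P$ to a one-line computation on the generators $\pi(g)M_0\pi(g)^*$ of $\mc V$, and then to invoke the operator system hypothesis $I\in\mc V$ to pin down the resulting scalar. The whole argument rests on the single structural observation that the range of $P$ sits inside a common eigenspace of all the $\pi(g)$.

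First I would unpack $P$. Since $G$ is Abelian, the discussion preceding the theorem already gives that $\{\pi(g):g\in G\}$ is simultaneously diagonalizable and that $P_i(g)$ is the spectral projection of $\pi(g)$ onto the eigenvalue $\lambda_i(g)$. In particular the factors $P_{i_g}(g)$ commute, so $P=\prod_{g\in G}P_{i_g}(g)$ is itself an orthogonal projection, namely onto $\bigcap_{g\in G}\operatorname{Ran}P_{i_g}(g)$. Every vector in $\operatorname{Ran}P$ is therefore an eigenvector of each $\pi(g)$, which yields $\pi(g)P=\lambda_{i_g}(g)P$ for every $g\in G$ and, upon taking adjoints, $P\pi(g)^*=\overline{\lambda_{i_g}(g)}P$; since $\pi(g)$ is unitary, $|\lambda_{i_g}(g)|=1$.

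Next I would compress the generators. For each $g\in G$ one gets $P\pi(g)M_0\pi(g)^*P=\lambda_{i_g}(g)\overline{\lambda_{i_g}(g)}\,PM_0P=PM_0P$, so every generator of $\mc V$ compresses by $P$ to the \emph{same} operator $PM_0P$, whence $P\mc V P=\mb C\,PM_0P$. It then remains to show that $PM_0P$ is a nonzero scalar multiple of $P$, and this is precisely where $I\in\mc V$ enters: because $\mc H$ is finite-dimensional the subspace $\mc V$ is closed, so $I$ can be written as a finite combination $I=\sum_j c_j\,\pi(g_j)M_0\pi(g_j)^*$. Compressing both sides by $P$ and using the identity just established gives $P=\bigl(\sum_j c_j\bigr)PM_0P$, and since $P\neq 0$ the scalar $\sum_j c_j$ must be nonzero; therefore $PM_0P=\bigl(\sum_j c_j\bigr)^{-1}P$ and $P\mc V P=\mb C P$, which is exactly the Knill--Laflamme/anticlique condition for $P$.

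I do not expect a real obstacle; the argument above is essentially complete. The only steps that require any care are checking that a product of commuting spectral projections is indeed the projection onto the common eigenspace (so that $\pi(g)P=\lambda_{i_g}(g)P$ holds), and that $I$ may be expanded in the generators of $\mc V$, which is automatic in finite dimensions. Finally, the hypothesis that $P$ has rank at least $2$ is used only to guarantee $P\neq 0$; its real purpose is to exclude the trivial rank-one codes, for which the anticlique condition holds automatically because every operator compresses to a scalar on a one-dimensional space.
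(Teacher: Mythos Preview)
Your proposal is correct and follows essentially the same route as the paper: both compress the generators $\pi(g)M_0\pi(g)^*$ by $P$ to obtain a scalar multiple of $PM_0P$, and then use a finite expansion of $I$ in the generators to force $PM_0P\in\mb C P$. The only cosmetic difference is that you invoke $|\lambda_{i_g}(g)|=1$ immediately (so every generator compresses to the \emph{same} operator $PM_0P$), whereas the paper carries the factor $|\lambda_{i_g}(g)|^2$ through to the end; your version is marginally cleaner for exactly this reason.
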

\begin{proof}
We need to prove that for any $g \in G$, we have $$P\pi(g)M_0\pi(g)^*P = c(g)P$$
for some constant $c(g)$ only depending on $g$. In fact, 
\begin{align*}
& P \pi(g)M_0\pi(g)^* P = P\sum_{i\in J_g}\lambda_i(g)P_i(g) M_0\sum_{j\in J_g}\overline{\lambda_j(g)}P_i(g) P 
\end{align*}
Note that given $g \in G$, for $i,j\in J_g$, we have $P_i(g)P_j(g) = \delta_{i,j}P_i(g)$. Therefore, for any $i\in J_g$,
\begin{align*}
PP_i(g) = P_i(g)P = 
\begin{cases}
P,\ \text{if}\ i = i_g;\\
0,\ \text{otherwise}.
\end{cases}
\end{align*}
Then
\begin{align}\label{Step I}
& P\sum_{i\in J_g}\lambda_i(g)P_i(g) M_0\sum_{j\in J_g}\overline{\lambda_j(g)}P_i(g) P  =  |\lambda_{i_g}(g)|^2  PM_0P.
\end{align}
Since $I \in \mc V_{M_0}$ and we are in a finite-dimensional setting, there exist $g_1,\cdots g_m \in G$ and $c_1,\cdots, c_m \in \mb C$ such that
\begin{align*}
\sum_{r=1}^m c_r \pi(g_r)M_0\pi(g_r)^* = I.
\end{align*}
Multiplying $P$ from left and right and using the fact that 
\begin{align*}
P \pi(g)M_0\pi(g)^* P = |\lambda_{i_g}(g)|^2  PM_0P,\ \forall g\in G,
\end{align*}
we have
\begin{align}\label{Step II}
P = \sum_{r=1}^m c_r P\pi(g_r)M_0\pi(g_r)^*P = \sum_{r=1}^m c_r |\lambda_{i_{g_r}}(g_r)|^2  PM_0P. 
\end{align}
Plug \eqref{Step II} into \eqref{Step I}, we have
\begin{align*}
P \pi(g)M_0\pi(g)^* P & = P\sum_{i\in J_g}\lambda_i(g)P_i(g) M_0\sum_{j\in J_g}\overline{\lambda_j(g)}P_i(g) P \\
& = |\lambda_{i_g}(g)|^2  PM_0P \\
& = \frac{|\lambda_{i_g}(g)|^2}{\sum_{r=1}^m c_r |\lambda_{i_{g_r}}(g_r)|^2}P =: c(g)P
\end{align*}
which shows that $P$ is a valid anticlique.
\end{proof}

\section{Stabilizer formalism and noncommutative graphs}
The \emph{stabilizer formalism} first presented in \cite{gottesman1997} involves an Abelian subgroup $G$ of $\mc P_n$, which is the Pauli group on $n$ qubits, such that $-I_2^{\otimes n}\notin G$. Denote 
\begin{align}\label{Notation of Pauli}
\sigma_0 = I_2,\sigma_1 = Z =\begin{pmatrix}
1 & 0 \\
0 & -1
\end{pmatrix}, \sigma_2 = X= \begin{pmatrix}
0 & 1 \\
1 & 0
\end{pmatrix}, \sigma_3 = Y = \begin{pmatrix}
0 & i \\
-i & 0
\end{pmatrix}.
\end{align}
For simplicity of notation, and as is common in the field, we let $X_j$ denote
\begin{align*}
X_j = I_2\otimes \cdots I_2 \otimes \underbrace{X}_{j-th\ component} \otimes I_2 \otimes \cdots \otimes I_2
\end{align*}
and similarly for $Y_j,Z_j$ for $1\leq j\leq n$. The Pauli group $\mc P_n$ is defined by 
\begin{equation}
\begin{aligned}
\mc P_n& = \langle X_j,Y_j,Z_j: 1\le j \le n \rangle \\
& = \{c \sigma_{j_1} \otimes \sigma_{j_2} \otimes \cdots \otimes \sigma_{j_n}: c = \pm 1,\pm i, 0\le j_1,\cdots, j_n \le 3\}
\end{aligned}
\end{equation}
The stabilizer code is defined as follows:
\begin{definition}
For any Abelian subgroup $G \subseteq \mc P_n$ such that $-I_2^{\otimes n}\notin G$, the stabilizer code $\mc C_G$ is defined by
\begin{equation}
\mc C_G:= \text{span}\{\ket{\psi}: g\ket{\psi} = \ket{\psi} \forall g\in G\}.
\end{equation}
\end{definition}
The well-known theorem of stabilizer formalism is the following \cite{gottesman1997, nielsen}:
\begin{theorem}\label{Stabilizer: classical}
For any Abelian subgroup $G \subseteq \mc P_n$ such that $-I_2^{\otimes n}\notin G$, let $E\in \mb M_{2^n}$ is an operator and denote by $P$, the projection onto the stabilizer code $\mc C_G$. Then
\begin{equation}
PEP = c(E)P \iff E \in \text{span}\{(\mc P_n \backslash N(G)) \cup G\},
\end{equation}
where $N(G) = \{h\in \mc P_n: hGh^{-1} = G\}$.
\end{theorem}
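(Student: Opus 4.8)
The plan is to reduce everything to the Pauli basis and to the single identity $P=\frac{1}{|G|}\sum_{g\in G}g$ for the projection onto $\mc C_G$. First I would record two structural facts. Since $G$ is Abelian with $-I_2^{\otimes n}\notin G$, every $g\in G$ has real coefficient and hence is a Hermitian involution ($g^2=I_2^{\otimes n}$, eigenvalues $\pm1$), so $P=\frac{1}{|G|}\sum_{g\in G}g$ is exactly the orthogonal projection onto the common $+1$-eigenspace $\mc C_G$; in particular $gP=Pg=P$ for every $g\in G$. Second, I would prove the identity $N(G)=C(G)$, i.e. the normalizer equals the centralizer: for $h\in\mc P_n$ and $g\in G$ one has $hgh^{-1}=\pm g$ because Pauli operators commute or anticommute, and the sign $-1$ is impossible for $h\in N(G)$, since it would force $-g\in G$ and hence $-I_2^{\otimes n}=(-g)g\in G$.

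Next I would pass to the basis of $4^n$ Hermitian tensor Paulis $Q=\sigma_{j_1}\otimes\cdots\otimes\sigma_{j_n}$, orthogonal for the Hilbert--Schmidt inner product $\langle A,B\rangle=\text{Tr}(A^\dagger B)$, and write a general $E=\sum_Q\alpha_Q Q$. By linearity the task is to compute $PQP$ for each $Q$. The key device is the homomorphism $\chi_Q:G\to\{\pm1\}$ defined by $gQ=\chi_Q(g)Qg$, which is multiplicative because the signs multiply. Using $gQ=\chi_Q(g)Qg$ together with $gP=P$, expanding only the left copy of $P$ gives
\[
PQP=\frac{1}{|G|}\Big(\sum_{g\in G}\chi_Q(g)\Big)QP .
\]
By orthogonality of characters the sum vanishes unless $\chi_Q$ is trivial, i.e. unless $Q$ commutes with all of $G$. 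This yields the dichotomy: if $Q\notin N(G)=C(G)$ then $PQP=0$, while if $Q\in N(G)$ then $PQP=QP$.

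It then remains to analyze $QP$ for $Q\in N(G)$. If $Q$ is a scalar multiple of an element of $G$ then $QP=\pm P\in\mb C P$. If instead $Q\in N(G)$ but $Q\notin\{\pm1,\pm i\}\cdot G$ (a nontrivial logical Pauli), I would show $QP\notin\mb C P$ by a trace computation: $\langle QP,P\rangle=\text{Tr}(Q^\dagger P)=\frac{1}{|G|}\sum_{g\in G}\text{Tr}(Qg)=0$, since $Qg$ is a scalar multiple of the identity for no $g\in G$, whereas $\|QP\|^2=\text{Tr}(Q^\dagger QP)=\text{Tr}(P)=\dim\mc C_G\neq0$; thus $QP$ is nonzero and orthogonal to $P$. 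Combining the three cases gives the implication $\Leftarrow$ at once: every $Q$ occurring in an element of $\text{span}\{(\mc P_n\backslash N(G))\cup G\}$ contributes either $0$ or a multiple of $P$, so $PEP\in\mb C P$.

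The main obstacle is the converse $\Rightarrow$. Here I would expand the same pairing for two normalizer elements: $\langle QP,Q'P\rangle=\text{Tr}(Q^\dagger Q'P)$ is nonzero precisely when $Q^\dagger Q'$ is proportional to some $g\in G$, that is, when $Q$ and $Q'$ lie in the same coset of $G$. Hence the compressions $\{QP\}$ of the nontrivial logical Paulis are pairwise Hilbert--Schmidt orthogonal across distinct cosets in $N(G)/G$ and orthogonal to $P$. Writing $PEP=\sum_{Q\in N(G)}\alpha_Q QP$ and grouping by coset, the condition $PEP\in\mb C P$ forces the contribution of every nontrivial logical class to vanish; matching these surviving classes with the logical Pauli operators on $\mc C_G$, which form a basis of $\mb M_{\dim\mc C_G}$, is what identifies the vanishing with $E$ carrying no forbidden component, i.e. $E\in\text{span}\{(\mc P_n\backslash N(G))\cup G\}$. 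The delicate point I expect to spend the most care on is precisely this linear-independence and spanning step for the logical operators, i.e. controlling the structure of $N(G)$ modulo $G$, since it is what upgrades the clean per-Pauli computation to an arbitrary $E\in\mb M_{2^n}$.
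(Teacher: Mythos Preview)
The paper does not actually prove this theorem: it is stated as the ``well-known theorem of stabilizer formalism'' and attributed to \cite{gottesman1997,nielsen}, so there is no proof in the paper to compare your approach against. Your argument for the implication $\Leftarrow$ is correct and is the standard one: the projector formula $P=\frac{1}{|G|}\sum_{g\in G}g$, the identity $N(G)=C(G)$, and the character computation $PQP=\big(\frac{1}{|G|}\sum_{g}\chi_Q(g)\big)QP$ together immediately give $PQP=0$ for $Q\in\mc P_n\setminus N(G)$ and $PQP=\pm P$ for $Q\in G$.

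There is, however, a genuine gap in your argument for $\Rightarrow$, and in fact the implication $\Rightarrow$ as written in the paper (for arbitrary $E\in\mb M_{2^n}$) is false. Your coset-orthogonality computation correctly shows that $PEP\in\mb C P$ forces, for each nontrivial coset $[Q]\in N(G)/G$, the vanishing of the \emph{combined} contribution $\sum_{Q'\in[Q]}(\pm)\alpha_{Q'}$; but this does not force the individual coefficients $\alpha_{Q'}$ to vanish, which is what membership in $\text{span}\{(\mc P_n\setminus N(G))\cup G\}$ requires. Concretely, take $n=2$, $G=\langle Z_1\rangle$, so $P=\ket{0}\bra{0}\otimes I_2$, and let $E=I\otimes X - Z\otimes X$. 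Both tensor Paulis lie in $N(G)\setminus G$, so $E\notin\text{span}\{(\mc P_n\setminus N(G))\cup G\}$, yet $PEP=(\ket{0}\bra{0}\otimes X)-(\ket{0}\bra{0}\otimes X)=0=0\cdot P$. Thus the biconditional holds only if one restricts $E$ to $\mc P_n$ itself (the usual textbook formulation), in which case your per-Pauli trichotomy already finishes the proof with no coset grouping needed; the ``delicate point'' you flagged is not merely delicate but unprovable in the generality stated.
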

In the framework of noncommutative graphs, the normalizer $N(G)$ also plays a great role, and we can essentially recover the stabilizer formalism via the following theorem:
\begin{theorem}\label{Stabilizer: new}
For any Abelian subgroup $G \subseteq \mc P_n$ such that $-I_2^{\otimes n}\notin G$, and $M_0\in \mb M_{2^n}$, define
\begin{equation}
\mc V_{M_0}:= \text{span}\{gM_0g: g\in G\}.
\end{equation}
Then the span of $\mc V_{M_0}$ such that $\mc V_{M_0}$ is an operator system coincides with all the correctable errors outside the normalizer of $G$ plus identity. In other words,
\begin{equation}
\text{span}\{ \mc V_{M_0}: M_0\ \text{is\ such\ that}\ \mc V_{M_0}\ \text{is\ an\ operator\ system}\} = \text{span}\{(\mc P_n \backslash N(G)) \cup I_2^{\otimes n}\}.
\end{equation}
\end{theorem}
\begin{remark}
Our noncommutative graph can recover all the detectable errors $E$ which are not commuting with $G$. The errors commuting with $G$ cannot be reflected in $\mc V_{M_0}$.
\end{remark}
Fix an Abelian subgroup $G \subseteq \mc P_n$ such that $-I_2^{\otimes n}\notin G$, we choose the representation $\pi:G \to \mb B(\mc H), g\mapsto g$. $M_0\in \mb B(\mc H)$ is prefixed. Then the linear subspace $\mc V_{M_0}$ is given by
\begin{equation}
\mc V_{M_0}:= \text{span}\{\pi(g)M_0\pi(g)^*: g\in G\} = \text{span}\{gM_0g: g\in G\},
\end{equation}
where the last equality follows from the fact that if $-I_2^{\otimes n}\notin G$, then $\forall g\in G, g^* = g$. We will adopt the following the index sets given by  
\begin{equation}\label{index set}
\mc I_0:= \{0,1\}, \mc I_1 := \{2,3\}.
\end{equation} 
then the following characterization is realized:
\begin{lemma}\label{Characterization of V_M}
Suppose $G = \langle Z_1,\cdots, Z_s \rangle$ for some $1\le s \le n$. Then $\mc V_{M_0}$ is an operator system if and only if 
\begin{equation}\label{statement: M_0}
\begin{aligned}
M_0 & = \alpha_{00\cdots 0}I_2^{\otimes n} + \sum_{\exists 1\le r\le s, j_r \in \mc I_1} \sum_{j_{s+1},\cdots, j_n=0}^3 \alpha_{j_1\cdots j_n}\sigma_{j_1}\otimes\cdots \otimes \sigma_{j_n} \\
& =  \alpha_{00\cdots 0}I_2^{\otimes n} + \sum_{\substack{i_1,\cdots, i_s = 0 \\ i_1+\cdots + i_s\neq 0}}^1  \sum_{\substack{j_r\in \mc I_{i_r}\\1\le r\le s}}\sum_{j_{s+1},\cdots, j_n=0}^3 \alpha_{j_1\cdots j_n}\sigma_{j_1}\otimes\cdots \otimes \sigma_{j_n},
\end{aligned}
\end{equation}
for $\alpha_{00\cdots 0} \neq 0$, and 
\begin{equation}\label{statement: M_0_1}
\begin{aligned}
&\forall i_1,\cdots, i_s=0,1,\ i_1+\cdots + i_s\neq 0, \exists c_{i_1,\cdots, i_s}\in \mb C, s.t., \\
&\overline{\alpha_{j_1\cdots j_n}} = c_{i_1,\cdots, i_s} \alpha_{j_1\cdots j_n}, \forall j_r\in I_{i_r}: 1\le r\le s;\ j_{s+1},\cdots, j_n=0,1,2,3.
 \end{aligned}
\end{equation}
\end{lemma}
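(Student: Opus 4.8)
The plan is to diagonalise the conjugation action of $G$ on the Pauli basis of $\mb M_{2^n}$ and then read off both axioms of an operator system ($I_2^{\otimes n}\in\mc V_{M_0}$ and $\mc V_{M_0}=\mc V_{M_0}^*$) from a single ``type decomposition'' of $M_0$. First I would record the group-theoretic setup: since $G=\langle Z_1,\dots,Z_s\rangle$ and the $Z_j$ are commuting Hermitian involutions on distinct tensor factors, every element of $G$ is $Z^{\vec a}:=Z_1^{a_1}\otimes\cdots\otimes Z_s^{a_s}\otimes I_2\otimes\cdots\otimes I_2$ with $\vec a\in\{0,1\}^s$, each is Hermitian, and (as already noted just before the lemma, using $-I_2^{\otimes n}\notin G$) $\mc V_{M_0}=\text{span}\{Z^{\vec a}M_0 Z^{\vec a}:\vec a\in\{0,1\}^s\}$. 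Expanding $M_0=\sum_{j_1,\dots,j_n}\alpha_{j_1\cdots j_n}\,\sigma_{j_1}\otimes\cdots\otimes\sigma_{j_n}$, the key elementary computation is that conjugation by $Z^{\vec a}$ multiplies the string $\sigma_{j_1}\otimes\cdots\otimes\sigma_{j_n}$ by $(-1)^{\sum_{1\le r\le s}a_r\mathbf 1[j_r\in\mc I_1]}$, since $Z$ commutes with $\{\sigma_0,\sigma_1\}=\{I_2,Z\}$ (indices in $\mc I_0$) and anticommutes with $\{\sigma_2,\sigma_3\}=\{X,Y\}$ (indices in $\mc I_1$). Thus this sign depends on $(j_1,\dots,j_n)$ only through the \emph{type} $\vec i\in\{0,1\}^s$ with $i_r=\mathbf 1[j_r\in\mc I_1]$ for $1\le r\le s$.

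Next, group the Pauli expansion of $M_0$ by type: write $M_0=\sum_{\vec i\in\{0,1\}^s}M_{\vec i}$, where $M_{\vec i}$ is the sum of the terms whose string has type $\vec i$ — exactly the inner double sum appearing in the second line of \eqref{statement: M_0} when $\vec i\neq\vec 0$. Then $Z^{\vec a}M_0 Z^{\vec a}=\sum_{\vec i}(-1)^{\vec i\cdot\vec a}M_{\vec i}$, and the orthogonality relation $\sum_{\vec a\in\{0,1\}^s}(-1)^{(\vec i+\vec i')\cdot\vec a}=2^s\delta_{\vec i,\vec i'}$ on $(\mb Z/2)^s$ yields the inversion $M_{\vec i}=2^{-s}\sum_{\vec a}(-1)^{\vec i\cdot\vec a}\,Z^{\vec a}M_0 Z^{\vec a}$. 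Hence
\[
\mc V_{M_0}=\text{span}\{Z^{\vec a}M_0 Z^{\vec a}:\vec a\in\{0,1\}^s\}=\text{span}\{M_{\vec i}:\vec i\in\{0,1\}^s\}.
\]
I would also record the blockwise principle: Pauli strings of distinct types span ``disjoint'' coordinate subspaces of $\mb M_{2^n}$ (their union over all types is the full Pauli basis), so any linear relation among the $M_{\vec i}$ can be checked type-block by type-block, and in particular the nonzero $M_{\vec i}$ are linearly independent.

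Finally, extract the two axioms. For unitality, $I_2^{\otimes n}$ is a Pauli string of type $\vec 0$, so projecting $I_2^{\otimes n}=\sum_{\vec i}c_{\vec i}M_{\vec i}$ onto each type-$\vec i$ block with $\vec i\neq\vec 0$ forces $c_{\vec i}M_{\vec i}=0$ there, whence $I_2^{\otimes n}=c_{\vec 0}M_{\vec 0}$; since $I_2^{\otimes n}\neq 0$ this is equivalent to $M_{\vec 0}=\alpha_{0\cdots 0}I_2^{\otimes n}$ with $\alpha_{0\cdots 0}\neq 0$ (and conversely this form makes $I_2^{\otimes n}=\alpha_{0\cdots 0}^{-1}M_{\vec 0}\in\mc V_{M_0}$), which is exactly the structural form \eqref{statement: M_0}. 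For self-adjointness, every Pauli string is Hermitian, so $M_{\vec i}^*$ lies in the same type-$\vec i$ block as $M_{\vec i}$; projecting $M_{\vec i}^*=\sum_{\vec i'}c_{\vec i'}M_{\vec i'}$ onto that block gives $M_{\vec i}^*=c_{\vec i}M_{\vec i}$, so $\mc V_{M_0}^*=\mc V_{M_0}$ if and only if $M_{\vec i}^*\in\mb C M_{\vec i}$ for every $\vec i$. Under \eqref{statement: M_0} this is automatic for $\vec i=\vec 0$, and for $\vec i\neq\vec 0$ it unwinds coefficientwise to the existence of the constants $c_{i_1,\cdots,i_s}$ in \eqref{statement: M_0_1}. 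Combining the two equivalences gives the lemma in both directions.

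I do not expect a serious obstacle: after the type decomposition the argument is harmonic analysis on $(\mb Z/2)^s$ followed by routine bookkeeping. The one point deserving care is that the displayed normal form \eqref{statement: M_0} is not merely a convenient parametrisation but a genuine \emph{necessary} condition — that $I_2^{\otimes n}\in\mc V_{M_0}$, a priori only a weak linear-membership statement, forces the entire type-$\vec 0$ part of $M_0$ to collapse to a scalar multiple of the identity — and this is precisely where the disjoint-support (linear-independence) observation about type blocks does the work.
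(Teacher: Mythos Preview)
Your proposal is correct and follows essentially the same approach as the paper: both decompose $M_0$ into its $\{0,1\}^s$-indexed ``type'' components (according to whether each of the first $s$ Pauli factors commutes or anticommutes with the corresponding $Z_r$), identify $\mc V_{M_0}$ with the span of these components, and then read off unitality and $*$-closure block by block. Your packaging via the Fourier inversion formula on $(\mb Z/2)^s$ is a cleaner rendering of what the paper does by explicitly writing out and solving the linear system for the coefficients $c_{i_1\cdots i_s}$ and the change of variables to $\wt c_{r_1\cdots r_k}$, but the decomposition, the key commutation computation, and the blockwise linear-independence argument are the same.
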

\begin{proof}
Assume  $M_0$ is given by
\begin{equation}
M_0 = \ \sum_{j_{1},\cdots, j_n=0}^3 \alpha_{j_1\cdots j_n}\sigma_{j_1}\otimes\cdots \otimes \sigma_{j_n},\ \alpha_{j_1\cdots j_n}\in \mb C.
\end{equation}
\textbf{Necessity: suppose $\mc V_{M_0}$ is an operator system.} 

\noindent \textbf{Step I: Implication of $I_2^{\otimes n} \in \mc V_{M_0}$}: Firstly, $I_2^{\otimes n} \in \mc V_{M_0}$ implies that there exists $c_{i_1\cdots i_s}\in \mb C, i_1,\cdots, i_s=0,1$, such that 
\begin{equation}
\begin{aligned}
I_2^{\otimes n} & = \sum_{i_1,\cdots, i_s=0}^1 c_{i_1\cdots i_s}(Z_1^{i_1}\cdots Z_s^{i_s})M_0(Z_1^{i_1}\cdots Z_s^{i_s}) \\
& = \sum_{i_1,\cdots, i_s=0}^1 c_{i_1\cdots i_s}\sum_{j_1,\cdots, j_s=0}^3\sum_{j_{s+1},\cdots, j_n=0}^3 \alpha_{j_1\cdots j_n}Z^{i_1}\sigma_{j_1}Z^{i_1}\otimes \cdots \otimes Z^{i_s}\sigma_{j_s}Z^{i_s} \otimes \sigma_{j_{s+1}}\otimes \cdots \otimes \sigma_{j_n} \\
& = \sum_{i_1,\cdots, i_s=0}^1 c_{i_1\cdots i_s}\sum_{j_1,\cdots, j_s \in \mc I_0}\sum_{j_{s+1},\cdots, j_n=0}^3 \alpha_{j_1\cdots j_n}\sigma_{j_1}\otimes \cdots \otimes \sigma_{j_n}\\
& + \sum_{i_1,\cdots, i_s=0}^1 c_{i_1\cdots i_s}\sum_{\exists 1\le r\le s, j_r \in \mc I_1}\sum_{j_{s+1},\cdots, j_n=0}^3 \alpha_{j_1\cdots j_n}Z^{i_1}\sigma_{j_1}Z^{i_1}\otimes \cdots \otimes Z^{i_s}\sigma_{j_s}Z^{i_s} \otimes \sigma_{j_{s+1}}\otimes \cdots \otimes \sigma_{j_n}.
\end{aligned}
\end{equation}
For the last equality, we used the fact that for $i=0,1,\ j=\mc I_0\cup \mc I_1$ we have
\begin{equation}
 Z^i \sigma_j Z^{i} = \begin{cases}
 \sigma_j, &j\in\mc I_0; \\
 (-1)^i\sigma_j, &j \in \mc I_1.
\end{cases}
\end{equation}
Moreover, for any given $i_1,\cdots, i_s=0,1$, we have 
\begin{align*}
& \sum_{\exists 1\le r\le s, j_r \in \mc I_1}\sum_{j_{s+1},\cdots, j_n=0}^3 \alpha_{j_1\cdots j_n}Z^{i_1}\sigma_{j_1}Z^{i_1}\otimes \cdots \otimes Z^{i_s}\sigma_{j_s}Z^{i_s} \otimes \sigma_{j_{s+1}}\otimes \cdots \otimes \sigma_{j_n} \\
& = \sum_{k=1}^s \sum_{1\le r_1<\cdots < r_k\le s} \sum_{\substack{j_{r_1},\cdots, j_{r_k} \in \mc I_1\\ j_r\in \mc I_0, r\neq r_1,\cdots, r_k}}\sum_{j_{s+1},\cdots, j_n=0}^3 (-1)^{i_{r_1}+ \cdots i_{r_k}}\alpha_{j_1\cdots j_n} \sigma_{j_1}\otimes \cdots \otimes \sigma_{j_n}.
\end{align*}
Since $\{\sigma_{j_1}\otimes \cdots \otimes \sigma_{j_n}: 0\le j_1,\cdots, j_n \le 3\}$ form an orthonormal basis, we have
\begin{equation}\label{restriction}
    a\begin{aligned}
& \sum_{i_1,\cdots, i_s=0}^1 c_{i_1\cdots i_s} \alpha_{00\cdots 0} = 1, \\
& \sum_{i_1,\cdots, i_s=0}^1 c_{i_1\cdots i_s} \alpha_{j_1\cdots j_n} = 0,\ \forall j_1,\cdots,j_s\in \mc I_0, j_1+\cdots+j_s \neq 0,\\
& \sum_{i_1,\cdots, i_s=0}^1 c_{i_1\cdots i_s} (-1)^{i_{r_1}+ \cdots i_{r_k}} \alpha_{j_1\cdots j_s j_{s+1}\cdots j_{n}} =0, \\
& \ \forall j_{r_1},\cdots, j_{r_k} \in \mc I_1, j_r\in \mc I_0, r\neq r_1,\cdots, r_k\ \text{for\ some}\ 1\le r_1<\cdots < r_k\le s.
\end{aligned}
\end{equation}
For the existence of $c_{i_1\cdots i_s}\in \mb C, i_1,\cdots, i_s=0,1$, note that $\sum_{i_1,\cdots, i_s=0}^1 c_{i_1\cdots i_s} \alpha_{00\cdots 0} = 1$ implies $\sum_{i_1,\cdots, i_s=0}^1 c_{i_1\cdots i_s} \neq 0$ thus 
\begin{equation}\label{restriction:1}
\alpha_{j_1\cdots j_n} = 0,\ j_1,\cdots, j_s\in \mc I, j_1+\cdots+j_s \neq 0.
\end{equation}
Moreover, note that
\begin{align*}
\begin{cases}
\sum_{i_1,\cdots, i_s=0}^1 c_{i_1\cdots i_s} = \frac{1}{\alpha_{00\cdots 0}} \neq 0, \\
\sum_{i_1,\cdots, i_s=0}^1 (-1)^{i_{r_1}+ \cdots i_{r_k}} c_{i_1\cdots i_s} = 0,\forall 1\le r_1<\cdots < r_k\le s.
\end{cases}
\end{align*}
has a unique solution. Thus there is no requirement for $\alpha_{j_1,\cdots, j_n}$ if at least one $j_1,\cdots, j_s$ is in $\mc I_1$.

In summary, by \eqref{restriction:1}, if $I_2^{\otimes n} \in \mc V_{M_0}$, $M_0$ must have the form
\begin{equation}\label{form of M_0}
M_0 = \alpha_{00\cdots 0}I_2^{\otimes n} + \sum_{\exists 1\le r\le s, j_r \in \mc I_1} \sum_{j_{s+1},\cdots, j_n=0}^3 \alpha_{j_1\cdots j_n}\sigma_{j_1}\otimes\cdots \otimes \sigma_{j_n},
\end{equation}
where $\alpha_{00\cdots 0} \neq 0$ and $\alpha_{j_1\cdots j_n}$ can be arbitrary complex numbers if $\exists 1\le r\le s, j_r \in \mc I^c$. 

\noindent \textbf{Characterization of $\mc V_{M_0}$ with the form $\mc V_{M_0} = \text{span}\{I,A_i:1\le i \le m\}:$} 

\noindent Note that if $M_0$ is given by \eqref{form of M_0}, for any $x\in \mc V_{M_0}$, there exist $c_{i_1\cdots i_s} \in \mb C$: 
\begin{equation}\label{rep:M_0}
\begin{aligned}
x & = \sum_{i_1,\cdots, i_s=0}^1 c_{i_1\cdots i_s}(Z_1^{i_1}\cdots Z_s^{i_s})M_0(Z_1^{i_1}\cdots Z_s^{i_s}) \\
& = \sum_{i_1,\cdots, i_s=0}^1 c_{i_1\cdots i_s} \alpha_{00\cdots 0}I_2^{\otimes n} \\
& +  \sum_{i_1,\cdots, i_s=0}^1 c_{i_1\cdots i_s} \sum_{\exists 1\le r\le s, j_r \in \mc I_1} \sum_{j_{s+1},\cdots, j_n=0}^3 \alpha_{j_1\cdots j_n}Z^{i_1}\sigma_{j_1}Z^{i_1} \otimes\cdots \otimes Z^{i_s}\sigma_{j_s}Z^{i_s} \otimes \sigma_{j_{s+1}}\otimes \cdots \otimes \sigma_{j_n}.
\end{aligned}
\end{equation}
For any fixed $i_1,\cdots, i_s = 0,1$, we have
\begin{equation}\label{rewrite:M_0}
\begin{aligned}
& \sum_{\exists 1\le r\le s, j_r \in \mc I_1} \sum_{j_{s+1},\cdots, j_n=0}^3 \alpha_{j_1\cdots j_n}Z^{i_1}\sigma_{j_1}Z^{i_1} \otimes\cdots \otimes Z^{i_s}\sigma_{j_s}Z^{i_s} \otimes \sigma_{j_{s+1}}\otimes \cdots \otimes \sigma_{j_n} \\
& = \sum_{k=1}^s \sum_{1\le r_1<\cdots<r_k\le s} \sum_{\substack{j_{r_1},\cdots, j_{r_k}\in \mc I_1 \\ j_r \in \mc I_0,r\neq r_1,\cdots,r_k}}\sum_{j_{s+1},\cdots, j_n=0}^3 \alpha_{j_1\cdots j_n}Z^{i_1}\sigma_{j_1}Z^{i_1} \otimes\cdots \otimes Z^{i_s}\sigma_{j_s}Z^{i_s} \otimes \sigma_{j_{s+1}}\otimes \cdots \otimes \sigma_{j_n} \\
& = \sum_{k=1}^s \sum_{1\le r_1<\cdots<r_k\le s} (-1)^{i_{r_1}+\cdots i_{r_k}}\sum_{\substack{j_{r_1},\cdots, j_{r_k}\in \mc I_1 \\ j_r \in \mc I_0,r\neq r_1,\cdots,r_k}}\sum_{j_{s+1},\cdots, j_n=0}^3 \alpha_{j_1\cdots j_n} \sigma_{j_1}\otimes \cdots \otimes \sigma_{j_n}.
\end{aligned}
\end{equation} 
If we denote 
\begin{equation}\label{bijective}
\wt c_{r_1 \cdots r_k} = \sum_{i_1,\cdots, i_s=0}^1 (-1)^{i_{r_1}+\cdots i_{r_k}}c_{i_1\cdots i_s},\ \wt c_{\underbrace{00\cdots 0}_{s}} = \sum_{i_1,\cdots, i_s=0}^1 c_{i_1\cdots i_s}
\end{equation}
and note that there is a one-to-one correspondence between the index sets with $2^s - 1$ elements:
\begin{align*}
& \{(r_1,\cdots,r_k):1\le k \le s, 1\le r_1<\cdots<r_k \le s\}\\
& \text{and}\ \{(u_1,\cdots,u_s):u_1,\cdots, u_s=0,1; u_1+\cdots +u_s\neq 0\},
\end{align*}
then the sum in \eqref{rep:M_0}, via \eqref{rewrite:M_0} and \eqref{bijective}, can be rewritten as
\begin{align*}
x & = \sum_{i_1,\cdots, i_s=0}^1 c_{i_1\cdots i_s} \alpha_{00\cdots 0}I_2^{\otimes n} \\
& +  \sum_{i_1,\cdots, i_s=0}^1 c_{i_1\cdots i_s} \sum_{\exists 1\le r\le s, j_r \in \mc I_1} \sum_{j_{s+1},\cdots, j_n=0}^3 \alpha_{j_1\cdots j_n}Z^{i_1}\sigma_{j_1}Z^{i_1} \otimes\cdots \otimes Z^{i_s}\sigma_{j_s}Z^{i_s} \otimes \sigma_{j_{s+1}}\otimes \cdots \otimes \sigma_{j_n} \\
& = \wt c_{00\cdots 0} \alpha_{00\cdots 0}I_2^{\otimes n} + \sum_{\substack{u_1,\cdots, u_s=0 \\ u_1+\cdots +u_s\neq 0}}^1 \wt c_{u_1,\cdots, u_s} \sum_{\substack{j_r \in \mc I_{u_r}\\ 1\le r \le s}}\sum_{j_{s+1},\cdots, j_n=0}^3  \alpha_{j_1\cdots j_n}\sigma_{j_1}\otimes \cdots \otimes \sigma_{j_n}.
\end{align*}
Since the choice of $c_{i_1\cdots i_s},\ i_1,\cdots, i_s=0,1$ is arbitrary, and by the definition of \eqref{bijective} and the one-to-one correspondence between $\{(r_1,\cdots,r_k):1\le k\le s,1\le r_1<\cdots < r_k\le s\}$ and $\{(u_1,\cdots,u_s): u_1,\cdots, u_s=0,1; u_1+\cdots +u_s\neq 0\}$, 
\begin{equation}
\wt c_{u_1,\cdots, u_s},\ u_1,\cdots,u_s=0,1
\end{equation}
can be arbitrary complex numbers, thus
\begin{align}\label{V_M}
\mc V_{M_0} & = \text{span}\{I_2^{\otimes n}, \sum_{\substack{j_r \in \mc I_{u_r}\\ 1\le r \le s}}\sum_{j_{s+1},\cdots, j_n=0}^3  \alpha_{j_1\cdots j_n}\sigma_{j_1}\otimes \cdots \otimes \sigma_{j_n}: u_1,\cdots, u_s=0,1; u_1+\cdots +u_s\neq 0\}.
\end{align}
\textbf{Step II: Implication of $\mc V_{M_0}$ to be $*$-closed:} From the characterization of $\mc V_{M_0}$, we know that for any $M_0$ given by \eqref{form of M_0}, 
\begin{equation}
\mc V_{M_0} = \text{span}\{I,A_1,\cdots, A_m\}
\end{equation}
for some $m\le 2^s$, and for each $1\le i\le m$, $A_i = \sum_{j\in J_i}a_{j}^i e_j^i$ where $a_j^i\in\mb C$ and $\{e_j^i:1\le i\le m, j\in J_i\}$ form an orthonormal set of $\mb B(\mc H)$ with $\mc H = (\mb C^2)^{\otimes n}$. Then it is straightforward to check that $\mc V_{M_0}$ is $*$-closed if and only if 
\begin{equation}\label{star closed}
\forall 1\le i \le m,\ \exists c_i \in \mb C,\ s.t.,\ \overline{a_j^i}=c_i a_j^i,\ \forall j\in J_i.
\end{equation}
Translating \eqref{star closed} into our setting, we get 
\begin{equation}\label{star closed: translated}
\begin{aligned}
&\forall i_1,\cdots, i_s=0,1,\ i_1+\cdots + i_s\neq 0, \exists c_{i_1,\cdots, i_s}\in \mb C, s.t. \\
&\overline{\alpha_{j_1\cdots j_n}} = c_{i_1,\cdots, i_s} \alpha_{j_1\cdots j_n}, \forall j_r\in I_{i_r}: 1\le r\le s;\ j_{s+1},\cdots, j_n=0,1,2,3.
\end{aligned}
\end{equation}

\textbf{Sufficiency: $M_0$ given by \eqref{statement: M_0} and \eqref{statement: M_0_1} implies $\mc V_{M_0}$ is an operator system.}

\noindent $I_2^{\otimes n}\in \mc V_{M_0}$ since $\alpha_{00\cdots 0}\neq 0$. Moreover, we note that \eqref{star closed} implies $\mc V_{M_0}$ is $*$-closed.
\end{proof}

\textbf{Proof of Theorem \ref{Stabilizer: new}:}
\begin{proof}
First, recall our convention that 
\begin{align*}
\sigma_0 = I_2,\ \sigma_1 = Z,\ \sigma_2=X,\ \sigma_3 = Y.
\end{align*}
If $G = \langle Z_1,\cdots, Z_s\rangle$, by the characterization \eqref{V_M} with restriction \eqref{star closed: translated} in Lemma \ref{Characterization of V_M}, we know that 
\begin{align*}
& \text{span}\{ \mc V_{M_0}: M_0\ \text{is\ such\ that}\ \mc V_{M_0}\ \text{is\ an\ operator\ system}\} \\
& = \text{span}\{I_2^{\otimes n}, \sigma_{j_1}\otimes \cdots \otimes \sigma_{j_n}:\ j_1,\cdots, j_n = 0,1,2,3,\ \exists 1\le r\le s,\ j_r = 2,3\}.
\end{align*}
Also note that if $I\notin G$, we have $N(G) = Z(G)$ where $Z(G) = \{g\in G: gh=hg,\forall h\in G\}$ is the centralizer. We have 
\begin{equation}
N(G) = \{c \sigma_{j_1}\otimes \cdots \otimes \sigma_{j_n}: j_1,\cdots, j_s = 0,1,\ j_{s+1},\cdots, j_n = 0,1,2,3,\ c=\pm 1,\pm i\}.
\end{equation}
Then we arrive at the conclusion that 
\begin{align*}
& \text{span}\{ \mc V_{M_0}: M_0\ \text{is\ such\ that}\ \mc V_{M_0}\ \text{is\ an\ operator\ system}\} \\
& = \text{span}\{I_2^{\otimes n}, \sigma_{j_1}\otimes \cdots \otimes \sigma_{j_n}:\ j_1,\cdots, j_n = 0,1,2,3,\ \exists 1\le r\le s,\ j_r = 2,3\} \\
& = \text{span}\{I_2^{\otimes n}, \mc P_n \backslash N(G)\}.
\end{align*}
If $G$ is any Abelian subgroup of $\mc P_n$ such that $-I_2^{\otimes n} \notin G$, then it is well-known(see e.g. \cite{gottesman1997}) that there exists a global unitary $U: \mc H\to \mc H$ such that \begin{equation}
G = \langle \wt Z_1,\cdots, \wt Z_s\rangle
\end{equation}
where $\wt Z_i = U Z_i U^*$. For \footnote{It is also known as ``logical" basis}{the new basis} of $\mc H$ given by
\begin{equation}
\{\ket{(i_1)_L\cdots (i_n)_L}: i_1,\cdots, i_n=0,1\},\ \ket{(i_1)_L\cdots (i_n)_L} = U \ket{i_1\cdots i_n},
\end{equation} 
$\wt Z_i$ acts as the Pauli $Z$ operator on the $i$-th ``logical" qubit. Similarly, we can define $\wt X_i = UX_iU^*,\wt Y_i = UY_iU^*, \wt Z_i = UZ_iU^*$ for $1\le i\le n$. Following the same argument as before, we have
\begin{align*}
& \text{span}\{ \mc V_{M_0}: M_0\ \text{is\ such\ that}\ \mc V_{M_0}\ \text{is\ an\ operator\ system}\} \\
& = \text{span}\{I_2^{\otimes n}, \wt \sigma_{j_1}\otimes \cdots \otimes \wt \sigma_{j_n}:\ j_1,\cdots, j_n = 0,1,2,3,\ \exists 1\le r\le s,\ j_r = 2,3\} \\
& = \text{span}\{I_2^{\otimes n}, \mc P_n \backslash N(G)\}.
\end{align*}
\end{proof}

\nocite{*}

\end{document}